\setlist{noitemsep} 
\titlespacing{\section}{0cm}{1.5cm}{-0.2cm}
\newtheoremstyle{plain}
{12pt}
{9pt}
{\it}
{}
{\bfseries}
{}
{\newline}
{}
\theoremstyle{plain}
\newtheorem{Prop}{Proposition}
\newtheorem{DefAlt}[Prop]{Definition}
\newtheorem{Thm}[Prop]{Theorem}
\begin{document}
\title{Equilibrium notions and framing effects}
\author{Christian Hilbe}
\date{March 2011}
\maketitle
\onehalfspacing

~\\[1cm]
\subsection*{Abstract}
Empirical evidence suggests that the Nash equilibrium leads to inaccurate predictions of human behaviour in a vast set of games. This observation has promoted the development of new solution concepts like the quantal response equilibrium \citep[QRE, see][]{McKPal95} or evolutionary equilibria that are based on the long-run performance of a strategy \citep{AntTraOhtTarNow09,Oht10}. However, it is well-known that the QRE is subject to framing effects: Duplicating a strategy affects the equilibrium predictions. Here we show that the above mentioned evolutionary equilibria exhibit the same inconsistency. Furthermore, we prove that such framing effects are inevitable if a game theoretic solution concept depends differentiably on the payoffs. As a consequence, we argue that differentiable equilibrium notions, while being of great help in analyzing well-specified games, are unsuitable for theoretical modeling, where it is not clear which payoff matrix gives the true representation of an economic interaction.\\[0.5cm]

\noindent {\it Author:} Christian Hilbe, University of Vienna, Austria. Email: hilbe@evolbio.mpg.de\\
{\it Keywords:} Equilibrium; Framing effect; Axiomatic approach; Impossibility theorem;
\newpage

\section{Introduction} 

For any class of games, static or dynamic, with complete or incomplete information, it is a simple task to create an example where the Nash equilibrium  mispredicts human behaviour, as shown for example in \citet{GoeHol01}. This holds true even for the most simple games with only one rationalizable equilibrium, like in the traveller's dilemma introduced by \citet{Basu94}. Suppose that two travelers, returning home from their vacation, discover that the airline has lost their luggage. The airline asks both travelers independently to make claims for compensation and, in order to prevent excessive claims, determines that only the lower of both amounts will be paid. Additionally, it is announced that if the claims are different, the person with the lower claim obtains some reward $R>1$, whereas the same amount $R$ will be deducted from the other traveler's reimbursement as a penalty. In case that only claims between \$180 and \$300 are accepted, the Nash prediction is straightforward: In order to rake in the reward, it is always optimal to undercut the co-player's claim by one dollar. Consequently, the lower bound of \$180 is the unique equilibrium. While this analysis holds true for any $R>1$, simple intuition suggests that subjects in the laboratory may try to coordinate on a higher claim if $R$ is comparably low.\footnote{\onehalfspacing As Kaushik Basu (1994) puts it, the strategy pair $(\text{''large''},\text{''large''})$ is a Nash equilibrium in ill-defined categories; if a player is told that the other player will choose a large number and if the reward $R$ is neglectable, then the best reply is to choose a large number as well. This explanation bears some similarity with the examples in \citet{CamFeh06}, who describe under which conditions a minority of irrational agents can trigger a majority of rational individuals to mimic the minority's behaviour.} Indeed, this intuition is confirmed by experiments: For $R = 5$, around 80 \% of the subjects opt for the maximum claim; only if $R$ is sufficiently increased, claims approach the Nash equilibrium outcome \citep{GoeHol01}. Seemingly, subjects in these experiments do not strictly stick to best responses and do not necessarily eliminate dominated strategies.

These observations are the starting point for several alternative equilibrium notions. In this article we will review two distinct examples, the quantal response equilibrium (QRE) of \citet{McKPal95} and the evolutionary equilibrium described in \citet{Oht10}. Instead of considering  traditional steplike best response correspondences, these equilibrium notions assume that strategy choices are positively but imperfectly related to payoffs.\footnote{\onehalfspacing The same idea has also been applied to some learning models, for example smooth fictitious play, see \citet{FudLev98}.} As a consequence, also dominated strategies may be played from time to time, which in turn may affect equilibrium behaviour. Ironically, {\it because} these alternative equilibrium notions allow a more realistic description of human behaviour, they also have a serious drawback: These equilibrium notions {\it themselves} are subject to framing effects. Different representations of the same economic situation result in different predictions. In particular, giving a strategy a second alias may affect the position of the equilibrium. 
 
We proceed as follows: In the next section, we review the QRE and the evolutionary equilibrium described in \citet{Oht10}. We show how two seemingly equivalent games can lead to diametrically opposed equilibrium predictions. While such framing effects are well-known in the case of the QRE, they have not been previously reported for the evolutionary equilibria. In Section~\ref{SecImpResCh5} we give an unexpected sufficient condition for such framing effects: If an equilibrium concept depends differentiably on the payoffs then inconsistencies are inevitable.\footnote{Roughly speaking, differentiability means that small changes of the payoffs lead to a small and predictable change of the equilibrium. Note that the Nash equilibrium concept does not satisfy this condition, since small changes in the payoffs may completely change the best response correspondences.} As a consequence it is argued in Section~\ref{SecDiscCh5} that the QRE and other differentiable equilibrium notions, although being of great help in analyzing already specified strategic games, might be unsuitable for doing theory, where the true representation of an economic problem is far from being clear.

\section{Examples of equilibrium notions with framing effects} \label{Sec2Ch5}

\subsection{The quantal response equilibrium}

The QRE was introduced by Richard D. McKelvey and Thomas R. Palfrey, first for games in normal form (1995) and later also for extensive form games (1998). \citet{GoeHolPal05} provide an axiomatic foundation. Since then, this concept was applied to various economic settings, including the traveler's dilemma \citep{CapEtAl99} or coordination games \citep{AndGoeHol01}. Typically, the QRE outplays the Nash equilibrium by far when it comes to predict human behaviour in laboratory experiments.\footnote{The overwhelming success is illustrated by the following quote of \citet{CamEtAl04}: {\it Quantal response equilibrium, a statistical generalization of Nash, almost always explains the deviations from Nash and should replace Nash as the static benchmark to which other models are routinely compared.}} Remarkably, the QRE can also be used to estimate the rationality of the subjects \citep{McKPal95} and to which extent they believe in their co-player's rationality \citep{Wei03}.

For our purposes, it will be sufficient to consider the simplest case, a finite normal form game between two players. The $R$-player chooses a row of the 
matrix $M=\big(a_{kl},b_{kl}\big)$, whereas the $C$-player chooses a column. As usual, players are allowed to randomize between their pure actions; we denote by $p^R$ and $p^C$ the respective mixed strategy vectors. For each player $K\in\{R,C\}$, we denote by $u^K_i$ the expected payoff of $K$'s pure action $i$, which of course depends on the co-player's strategy $p^{-K}$, that is $u^K_i=u^K_i(p^{-K})$. A main aspect of the QRE is that choice probabilities are positively but imperfectly related to payoffs. According to the most commonly used parametrization of the QRE, the logit rule, the probability to play action $i$ is determined by the following stochastic reaction function $\sigma$: 
\begin{equation}
p^K_i=\sigma(u^K_i)=\frac{\exp(\lambda \cdot u^K_i)}{\sum_j \exp(\lambda \cdot u^K_j)}
\end{equation}

The sum in the denominator ensures that the probabilities sum up to one. The parameter $\lambda$ can be interpreted as a measure of rationality: $\lambda=0$ means that actions are chosen randomly from the set of possible alternatives, whereas for large $\lambda$ the choice is increasingly biased towards the strategy with the highest payoff. Note that as long as $\lambda<\infty$, even dominated strategies get a positive weight. For analyzing data, the parameter $\lambda$ is typically estimated using the maximum likelihood method. A logit equilibrium is then defined as a fixed point of the map $\sigma$: A pair of mixed strategies $\hat{p}=(\hat{p}^R,\hat{p}^C)$ is an equilibrium if for both players $K \in \{R,C\}$ and all their strategies $i$ the following condition holds:

\begin{equation}
\hat{p}_i^K=\sigma \big(u^K_i (\hat{p}^{-K})\big).
\end{equation}

Such equilibria always exist but need not to be unique. As $\lambda$ goes to infinity, logit equilibria approach Nash equilibria. Furthermore, the graph of 
all fixed points $\hat{p}$ contains a unique branch, starting at the centroid of the strategy simplex for $\lambda=0$ and converging to a unique Nash equilibrium as $\lambda$ approaches infinity, implying that the logit equilibrium can be applied to the problem of equilibrium selection. Since the stochastic reaction function $\sigma$ depends differentiably on the payoffs for $\lambda<\infty$, by the implicit function theorem the same holds true for each branch of the graph of the logit equilibria.\footnote{As we will see in Section~\ref{SecImpResCh5}, the smooth dependence on the payoffs plays a key role. It is valid not only for the logit equilibrium but for the QRE in general, since stochastic reaction functions are generally assumed to be differentiable, see \citet{McKPal95} resp. \citet{GoeHolPal05}.}

Let us illustrate the logit equilibrium with an example taken from \citet{GoeHol01}. Consider the following coordination game in which players receive \$1.80 for coordinating on the high equilibrium and \$0.90 if they coordinate on the low equilibrium. Additionally, the column-player has an outside option that guarantees a safe payoff of \$0.40: 

$$
\begin{array}{l|c|c|c|}
\multicolumn{1}{c}{ }&\multicolumn{1}{c}{~~~~~L~~~~~}	&\multicolumn{1}{c}{~~~~~H~~~~~}	&\multicolumn{1}{c}{~~~~~S~~~~~}\\ \cline{2-4}
L 	&90,90	&0,0	&x,40\\  \cline{2-4}
H 	&0,0	&180,180	&0,40\\ \cline{2-4}
\end{array}
$$

This game has two pure Nash equilibria, $(H,H)$ and $(L,L)$, and the safe option is never part of an equilibrium. Nevertheless, as shown in \citet{GoeHol04}, the outside option has a deciding influence on coordination behaviour in behavioural experiments. In particular, the exact value of $x$ controls which strategies are chosen, with sufficiently low values of $x$ prefering the $(H,H)$ equilibrium. Such an effect is correctly predicted by the logit equilibrium but not by the Nash equilibrium, see Fig.~\ref{FigQRECh5}a for an example with $x=160$: The unique branch of the logit equilibrium, starting in the center for $\lambda=0$ converges to the high equilibrium in the limit of rational agents, $\lambda \rightarrow \infty$. To illustrate that the QRE is subject to framing effects, we consider the same game, but with the second player having two (identical) outside options: 

\begin{equation} \label{PayMat1Ch5}
\begin{array}{l|c|c|c|c|}
\multicolumn{1}{c}{ }&\multicolumn{1}{c}{~~~~~L~~~~~}	&\multicolumn{1}{c}{~~~~~H~~~~~}	&\multicolumn{1}{c}{~~~~~S_1~~~~~}&\multicolumn{1}{c}{~~~~~S_2~~~~~}\\ \cline{2-5}
L 	&90,90	&0,0	&x,40	&x,40\\  \cline{2-5}
H 	&0,0	&180,180	&0,40	&0,40\\ \cline{2-5}
\end{array}
\end{equation}

\begin{figure}[t!]
\centering
\includegraphics[height=4cm]{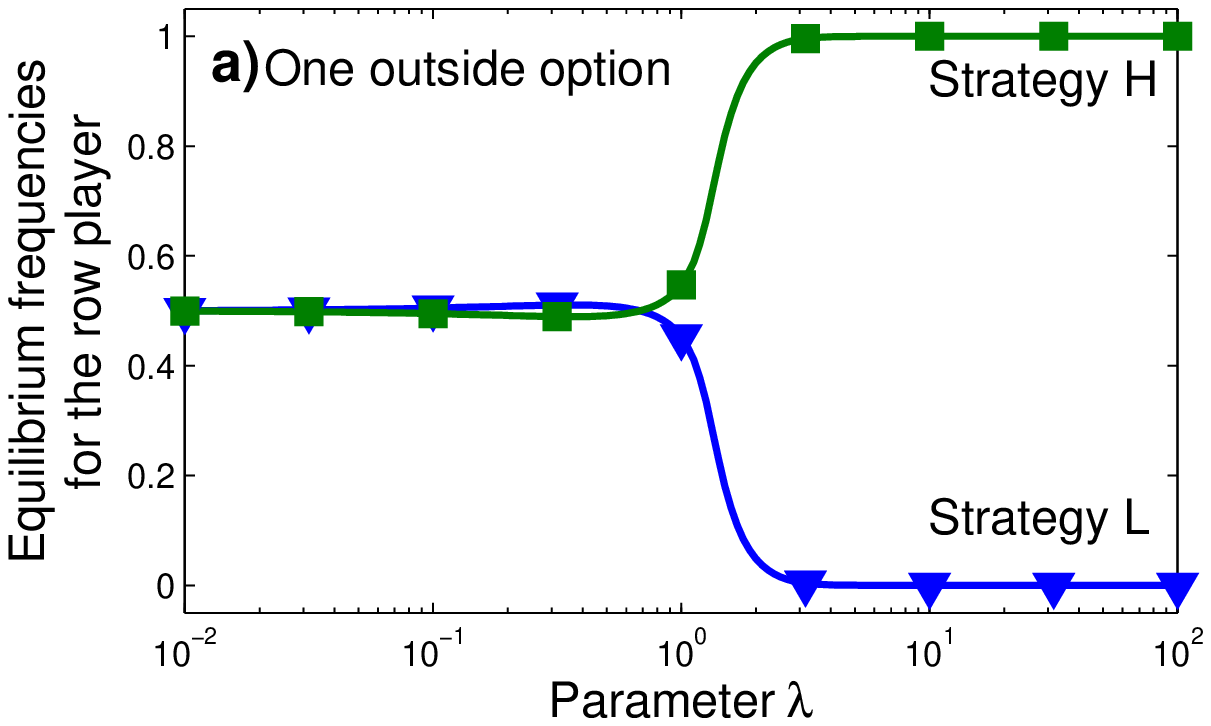}
~~
\includegraphics[height=4cm]{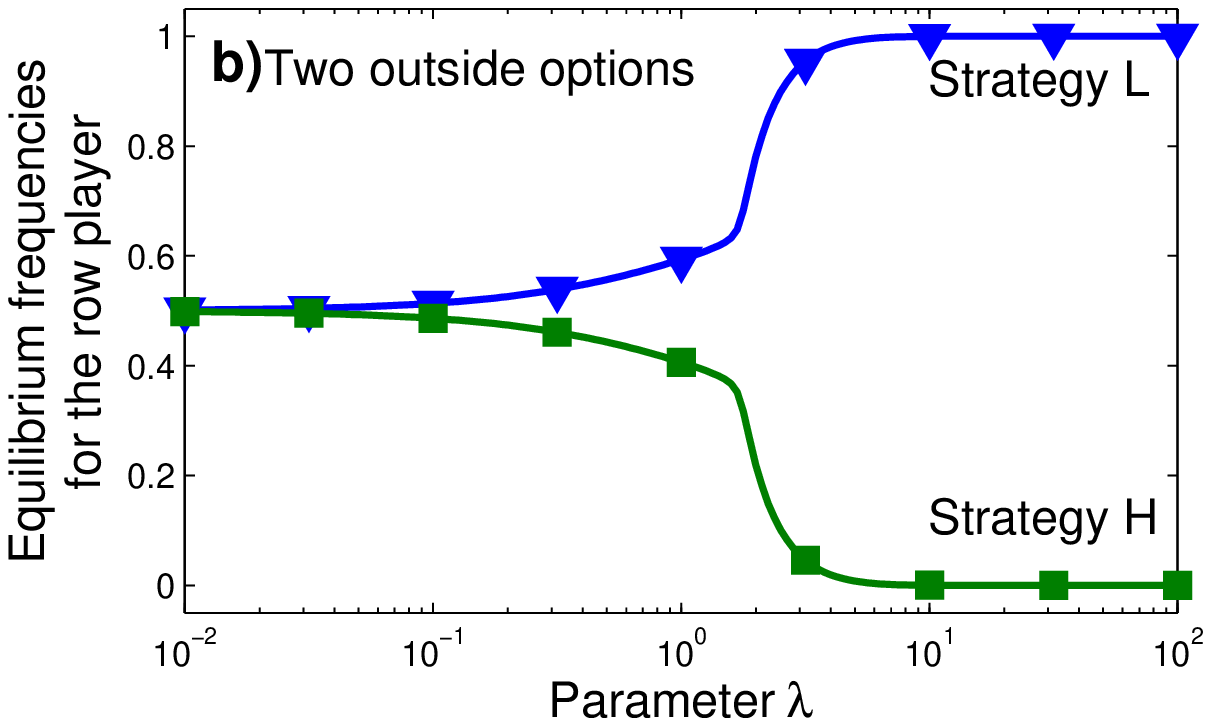}
\parbox{15cm}{\centering \parbox{13cm}{~\\[-.3cm]
\caption{The unique branch of the logit equilibrium for the coordination game with one, respectively with two outside options ($x=160$). a) In the game with one outside option, the value of $x$ is small enough to enable subjects to coordinate on the high equilibrium. b) However, if the problem is represented differently, splitting the safe option in two, players are predicted to coordinate on the low equilibrium.}\label{FigQRECh5}}}
\end{figure} 

While this additional strategy has no effect on the Nash equilibria, it alters the set of logit equilibria (Fig.~\ref{FigQRECh5}b): Giving the outside option a second {\it name} leads the logit equilibrium to select the low instead of the high equilibrium. Thus, the prediction of the logit equilibrium depends sensitively on the exact formulation of the alternatively possible strategies.\footnote{In the above example one might argue that the inconsistency can be avoided if identical columns are omitted by definition. However, if the game is marginally modified such that there are no identical columns, elimination of the additional column seems unjustified:
\begin{equation} \label{PayMat2Ch5}
\begin{array}{l|c|c|c|c|}
\multicolumn{1}{c}{ }&\multicolumn{1}{c}{~~~~~L~~~~~}	&\multicolumn{1}{c}{~~~~~H~~~~~}	&\multicolumn{1}{c}{~~~~~S_1~~~~~}&\multicolumn{1}{c}{~~~~~S_2~~~~~}\\ \cline{2-5}
L 	&90,90	&0,0	&x,40	&x+\varepsilon,40\\  \cline{2-5}
H 	&0,0	&180,180	&0,40	&~~~0,40\\ \cline{2-5}
\end{array}
\end{equation}
More fundamentally, it is typically not the subjects who construct payoff matrices to help them with their decisions, but it is the researcher who uses such tools to describe the decision maker's behaviour. How should one decide which matrix gives the true representation of the decision problem? Similarly, to adapt the argumentation of \citet{KohMer86} on a related issue, {\it elementary transformations}, [like giving a column a second alias], {\it are irrelevant for correct decision making: after all, the transformed matrix is just a different representation of the same decision problem, and decision theory should not be misled by representation effects. To hold the opposite point of view is to admit that decision theory is useless in real-life applications, where problems present themselves without a special formalism.}}

\subsection{Evolutionary equilibria}

\citet{Oht10} considers the following model of an evolutionary dynamics for an asymmetric game: There are two populations, a population $R$ of row-players and a population $C$ of column-players, with population size $N_R$ resp. $N_C$. Each player in $R$ chooses a row $i$ of the matrix $M=( a_{kl},b_{kl})\in \mathbb{R}^{mn}$, whereas each player in $C$ chooses a column $j$. Then, every subject in population $R$ plays against every subject in the other population, leading to the payoffs $u_i^R$ and $u_j^C$, respectively. Subsequently, the fitness $f_i$ of a player with strategy $i$ is defined by an exponential transformation of its payoff, i.e. $f_i^K=\exp(\delta u_i^K)$ with $K\in \left\{ R,C \right\}$. The factor $\delta$ measures the importance of the game for the fitness of a player and is usually called the strength of selection. If $\delta\rightarrow 0$, each agent has approximately the same fitness, a case which is termed the weak selection limit.

After those interactions, one subject (of any of the two populations) is chosen at random. This agent is allowed to change its strategy by imitating the strategy of another player of the same population. It is assumed that strategies with higher fitness are more likely to be adopted. More specifically, if $N_k^R$ denotes the current number of row-players with strategy $k$, then the probability that a randomly chosen row-player imitates an agent with strategy $i$ is given by 
\begin{equation}
p_i^R=\frac{N_i^Rf_i^R}{\sum_k N_k^Rf_k^R}.
\end{equation} 
Additionally, one allows mutations: With probability $u$, the agent does not imitate others, but chooses randomly any of the available strategies.

Overall, this evolutionary dynamics results in a stochastic selection-mutation process without absorbing states. In some special cases, it is possible to explicitly calculate the invariant distribution of the process. One important case is the limit of weak selection, in which the fitness of each individual is largely independent of its payoff in the game. As a consequence, each strategy for population $R$ is approximately played with probability $1/m$, only slightly truncated by a term $\varphi_i$, which reflects the impact of the respective strategy. If both populations are of equal size this term is, up to a multiplicative constant, given by
\begin{equation} \label{phi}
\varphi_i=\bar{a}_i-\bar{a},
\end{equation}
where $\bar{a}_i$ denotes the average of all feasible payoffs for a player with strategy $i$, that is $\bar{a}_i=\sum_ja_{ij}/n$, and $\bar{a}$ denotes the average of all feasible payoffs for individuals in population $R$, $\bar{a}=\sum_{i,j}a_{ij}/(mn)$.\footnote{This expression for $\varphi_i$ resembles the well-known replicator dynamics, where it is assumed that the frequency of players with strategy $i$ increases if the payoff $u_i$ exceeds the average payoff in the population $\bar{u}$ \citep[see, for example][]{Wei95}. However, while the payoffs $u_i$ and $\bar{u}$ may vary over time, depending on the current state of population, the term $\varphi_i$ is constant and does only depend on the payoff matrix.} Since this mutation-selection process does not settle down on any stable state, the deviation terms $\varphi_i$ take the role of the major characteristic of the system. It is said that {\it selection favors strategy $i$} if $\varphi_{i}$ is positive \citep{Now06}. Furthermore, one may compare two different strategies with each other: \citet{AntTraOhtTarNow09} call {\it strategy $i$ more abundant than strategy $k$} if $\varphi_i>\varphi_k$. In effect, this approach allows a ranking of the strategies - based on the long run performance of each strategy in the above described evolutionary process. 

However, it is easy to show that this evolutionary equilibrium exhibits the same framing effects as the QRE. In fact, calculating $\varphi$ for the example in the previous section (for $x=60$) yields $\varphi_L^R=-5$ and $\varphi_H^R=5$ in the case of representation (\ref{PayMat1Ch5}), respectively $\varphi_L^R=3.75$ and $\varphi_H^R=-3.75$ in the case of representation (\ref{PayMat2Ch5}).
Hence, neither does the absolute value of $\varphi_L^R$ allow a consistent assessment across the different treatments, nor is the order of $H$ and $L$ left invariant. In the case of weak selection and uniform mutations, one can easily determine the reason for this inconistency: While in the first representation, the outside option is played by roughly $1/3$ of all column-players, this fraction increases to approximately $1/2$ of the $C$-population if there are two outside options, which in turn encourages row-players to choose strategy $L$.

\section{An Impossibility Result}\label{SecImpResCh5}

Let us turn to the question whether it is possible to construct other equilibrium notions (or other parametrizations of the QRE, respectively more general evolutionary equilibria) that avoid the inconsistencies shown in the previous section. For simplicity, we focus on two-player normal form games and identify each game $\Gamma$ with its payoff matrix. For this reason, denote by $\mathcal{M}=\big\{(a_{ij},b_{ij}):~a_{ij},b_{ij}\in\mathbb{R}\big\}$ the set of all payoff matrices for normal form games, and let $\mathcal{M}_k\subset \mathcal{M}$ be the set of all payoff matrices that have exactly $k$ rows (i.e., that admit $k$ pure strategies for the row-player). 

\begin{DefAlt}[Assessment formula]
A game theoretic assessment is a function $f:\mathcal{M}_k \rightarrow \mathbb{R}^k.$
\end{DefAlt}

One may interpret each entry of $f(M)$ as the predicted equilibrium frequencies  for the row-player in the normal form game defined by the payoff matrix $M$.\footnote{In this case one can restrict the image of the game theoretic assessment to the unit simplex $\Delta^k$ instead of $\mathbb{R}^k$. Since equilibria need not to be unique, this interpretation requires that for each game one particular equilibrium is selected out of the set of possible equilibria.} In particular, note that the above definition of game theoretic assessments includes the logit equilibrium $\hat{p}$ and the evolutionary assessment $\varphi$ from the previous section.

In order to exclude framing effects, we demand that equivalent representations of a game lead to the same assessments. Up to a renumbering of the strategies of the column-player, we say that two matrices $M,\hat{M} \in \mathcal{M}_k$ are equivalent if they result in the same matrix after deleting all columns that are a copy of a previous column. More formally, $M \sim \hat{M}$ if for all columns $j$ of M there is a column $l$ in $\hat{M}$ such that $(a_{ij},b_{ij})=(\hat{a}_{il},\hat{b}_{il})$ for all rows $i$ (and vice versa, for all columns $l$ in $\hat{M}$ there is such a column $j$ in $M$). Obviously, this defines an equivalence relation on the set $\mathcal{M}_k$ for all $k$. 

\begin{DefAlt}[Consistent assessment formulas]
Fix a $k\ge 2$. An assessment formula $f:\mathcal{M}_k \rightarrow \mathbb{R}^k$ is called consistent if it has the following properties:
\begin{enumerate}
\item {\it Non-manipulability:} If $M,K\in \mathcal{M}_k$ and $K \sim M$ then $f(K)=f(M)$.\footnote{A similar condition can be found, for example, in Milnor's famous work on {\it games against nature}, see \citet{Mil51}.}
\item {\it Validity:} If the row-player's strategy $i$ is strictly dominated, then $i$ cannot be optimal, $f_i(M)<\max_j f_j(M)$.
\end{enumerate}
\end{DefAlt}

Non-manipulability means that a consistent assessment is {\it well-defined} with respect to the above equivalence relation, i.e. it respects that two matrices $M,\hat{M}$ with $M \sim \hat{M}$ represent the same game. Validity excludes constant assessment formulas from being considered. After these preparations, we are able to formulate the main result:

\begin{Thm}[An impossibility theorem]
There is no assessment fromula that is both, consistent and differentiable.
\end{Thm}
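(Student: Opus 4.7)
The plan is to argue by contradiction: assume an assessment $f$ is both consistent and differentiable, and then force $f$ restricted to the space of one-column games $\mathcal{M}_{k,1}$ to be constant, which will be incompatible with validity.

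The heart of the argument is the claim that the Jacobian $Df(M_0)$ vanishes at every ``diagonal'' two-column matrix $M_0 = (C,C) \in \mathcal{M}_{k,2}$. To prove this, I would work at the three-column point $\hat M_0 = (C,C,C) \in \mathcal{M}_{k,3}$ and, for an arbitrary direction $v \in \mathbb{R}^{2k}$, consider the two families
\[
\hat M_A(t) = (C,\,C,\,C+tv), \qquad \hat M_B(t) = (C,\,C+tv,\,C+tv).
\]
Both are equivalent under $\sim$ to the very same two-column matrix $(C, C+tv)$, so by non-manipulability $f(\hat M_A(t)) = f(\hat M_B(t)) = f(C, C+tv)$ as functions of $t$. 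Matching the first-order Taylor expansions at $t=0$ inside $\mathcal{M}_{k,3}$, and using that column-permutation invariance of $f$ (also a consequence of non-manipulability) forces $\partial_1 f(\hat M_0) = \partial_2 f(\hat M_0) = \partial_3 f(\hat M_0)$ at the fully symmetric point, will yield $\partial_j f(\hat M_0) = 0$ for each $j$. Reading the derivative of the same quantity directly in $\mathcal{M}_{k,2}$ then gives $\partial_2 f(M_0) \cdot v = 0$ for every $v$, and by column-symmetry in $M_0$ also $\partial_1 f(M_0) = 0$; hence $Df(M_0) = 0$.

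With the key step in hand, I would observe that the set $\Delta = \{(C,C) : C \in \mathbb{R}^{2k}\} \subset \mathcal{M}_{k,2}$ is a connected linear submanifold, so the vanishing of $Df$ at every point of $\Delta$ forces $f|_\Delta$ to be constant. Non-manipulability identifies $(C,C)$ with the one-column game $C$, so $f$ is constant on $\mathcal{M}_{k,1}$ as well. To close the argument I would test this constant against validity on two specific one-column games: the payoff vector $(1,0,\ldots,0)$ forces $f_1$ to exceed every other $f_j$, while $(0,1,0,\ldots,0)$ forces $f_2$ to exceed every other $f_j$; these two orderings cannot both hold for the same constant vector, yielding the required contradiction.

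The main obstacle, I expect, is bookkeeping across strata. Since the domain $\bigcup_n \mathcal{M}_{k,n}$ is a disjoint union of spaces of different dimension, and differentiability is only postulated within each fixed-$n$ stratum, the sole bridge between $\mathcal{M}_{k,2}$ and $\mathcal{M}_{k,3}$ is the combinatorial relation $\sim$; the ``two Taylor expansions must coincide'' step needs to be set up with some care to make sure the derivatives genuinely refer to the same scalar function of $t$. Related is the justification of the symmetry of the three column-partials at a point where columns literally coincide, which deserves an explicit appeal to invariance of $f$ under the permutation action on columns (itself a consequence of non-manipulability, since permuting columns never changes the equivalence class under $\sim$).
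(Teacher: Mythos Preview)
Your argument is correct. The core trick is the same as the paper's: compare two perturbations that differ only in how many of several duplicate columns are moved, observe that both reduce to the same matrix under $\sim$, and read off a vanishing partial derivative. The organization, however, is genuinely different. You work only at fully duplicated base points $M_0=(C,C)$ and $\hat M_0=(C,C,C)$, use permutation invariance at the symmetric point to turn one vanishing column-partial into three, and then descend to $\mathcal{M}_{k,1}$ to contradict validity. The paper instead fixes an \emph{arbitrary} $2\times 2$ matrix $M$, lifts to a $2\times 4$ matrix with the first column tripled (the extra, non-duplicated second column is carried along), and uses two pairs of equivalent perturbations to show that $\partial f/\partial a_{ij}(M)=0$ at \emph{every} $M$; the contradiction with validity is then immediate without any restriction to a diagonal. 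Your route is a bit more economical (three columns instead of four, one equivalence comparison plus symmetry instead of several), while the paper's route yields the stronger intermediate statement that $f$ is independent of the row-player's payoffs at every point of $\mathcal{M}_{k,2}$, not just along the diagonal. Both close cleanly; the bookkeeping caveats you flag (differentiability only within each fixed-$n$ stratum, the need to derive column-permutation invariance from non-manipulability) are exactly the right points to make explicit.
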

\begin{proof}
We show the case $k=2$: Suppose there is such a consistent and differentiable assessment $f$ and consider the arbitrary but fixed matrix $$M=
\left[
\begin{array}{cc}
(a_{11},b_{11})	&(a_{12},b_{12})\\
(a_{21},b_{21})	&(a_{22},b_{22})
\end{array}•
\right].$$
Let $\partial f/\partial \alpha_{ij}$ denote the marginal change of the assessment if the row-player\rq{}s payoff in the $i$-th row and the $j$th column is varied. The idea of the proof is as follows: Non-manipulability implies that all partial derivatives $\partial f/ \partial \alpha_{ij}(M)$ are zero, which suggests that the value of $f(M)$ is independent of the values of $a_{ij}$. This in turn contradicts validity. To show that the derivatives equal zero, we define the two matrices
$$M_1(t)=
\left[
\begin{array}{cc} 
(a_{11}+t,b_{11})	&(a_{12},b_{12})\\
(a_{21},b_{21})	&(a_{22},b_{22})
\end{array}•
\right]\text{~~~and}$$ 
$$M_2(t)=
\left[
\begin{array}{cccc}
(a_{11}+t,b_{11})	&(a_{11}+t,b_{11})	&(a_{11}+t,b_{11}) &(a_{12},b_{12})\\
(a_{21},b_{21})	&(a_{21},b_{21})	&(a_{21},b_{21})	&(a_{22},b_{22})
\end{array}
\right].$$
$M_2(t)$ is obtained from $M_1(t)$ by doubling the first column two times. Note that $M_1(0)=M$. Next we define two functions that measure how the respective strategy assessments vary with $t$, i.e. we define $u_i:\mathbb{R} \rightarrow \mathbb{R}^k$ with $u_i(t)=f\big(M_i(t)\big)$ for $i=1,2.$ Since $M_1(t) \sim M_2(t)$ for all $t$, non-manipulability implies that $u_1(t)=u_2(t)$. In particular, the derivatives for $t=0$ coincide:
\begin{equation} \label{dfda11}
u_1\rq{}(0)=\frac{\partial f}{\partial \alpha_{11}}(M)=
\frac{\partial f}{\partial \alpha_{11}}\big(M_2(0)\big) + 
\frac{\partial f}{\partial \alpha_{12}}\big(M_2(0)\big) + 
\frac{\partial f}{\partial \alpha_{13}}\big(M_2(0)\big)=u_2\rq{}(0).
\end{equation}
Therefore, since we want to show $\partial f/\partial \alpha_{11}(M)=0$, we have to compute the expression on the right hand\rq{}s side of (\ref{dfda11}). For this reason, we define two new matrices: 

$$M_3(t)=
\left[
\begin{array}{cccc}
(a_{11},b_{11})	&(a_{11},b_{11})	&(a_{11}+t,b_{11}) &(a_{12},b_{12})\\
(a_{21},b_{21})	&(a_{21},b_{21})	&(a_{21},b_{21})	&(a_{22},b_{22})
\end{array}•
\right]\text{~~~and}$$ 
$$M_4(t)=
\left[
\begin{array}{cccc}
(a_{11},b_{11})	&(a_{11}+t,b_{11})	&(a_{11}+t,b_{11}) &(a_{12},b_{12})\\
(a_{21},b_{21})	&(a_{21},b_{21})	&(a_{21},b_{21})	&(a_{22},b_{22})
\end{array}•
\right].\text{~~~~~}$$
Note that these two matrices have the same reduced normal form and hence are equivalent. Additionally, they fulfill $M_3(0)=M_4(0)=M_2(0)$. If we again define functions $u_i(t)=f\big(M_i(t)\big)$ for $i=3,4$, we may conclude that
\begin{equation} 
u_3\rq{}(0)=\frac{\partial f}{\partial \alpha_{13}}\big(M_2(0)\big)=
\frac{\partial f}{\partial \alpha_{12}}\big(M_2(0)\big) + 
\frac{\partial f}{\partial \alpha_{13}}\big(M_2(0)\big)=u_4\rq{}(0),
\end{equation}
\noindent and therefore $\partial f/\partial \alpha_{12}\big(M_2(0)\big)=0$. With a similar calculation one can show that the other two expressions on the right hand\rq{}s side of (\ref{dfda11}) , $\partial f/\partial \alpha_{11}\big(M_2(0)\big)$ and $\partial f/\partial \alpha_{13}\big(M_2(0)\big)$, vanish as well. Therefore, we indeed end up with  $\partial f/\partial \alpha_{11}(M)=0$. A symmetry argument then immediately implies that $\partial f/\partial \alpha_{ij}(M)=0$ for all $i$ and $j$. As a consequence, the assessment $f(M)$ does not vary in the row-player\rq{}s payoffs, which leads to a contradiction with the validity of the assessment.
\end{proof}

Therefore, we must conclude that there is no reasonable equilibrium concept that is both, non-manipulable and smooth. If we interpret the value of $f_i(M)$ slightly differently, as an indicator of the performance of strategy $i$, then the previous theorem states that is impossible to measure the success of a strategy with a differentiable formula.

In particular, the inconsistencies of the evolutionary assessment $\varphi$ cannot be simply attributed to the assumption of weak selection. Even in the case of some positive but finite selection pressure $\delta$, the stationary distribution in \citet{Oht10} depends differentiably on the entries of the payoff matrix.

\section{Discussion} \label{SecDiscCh5}

Explaining human behaviour with game theoretic models faces at least two difficulties. Firstly, the modeller does usually not know the exact subjective utilities of the agents; instead there might be only some rough estimates. In order to obtain robust results, one might therefore require that the output of the model depends differentiably on the input data. Secondly, in order to set up the model, the researcher needs to choose one specific description of reality, out of many alternatively possible descriptions. One such choice might entail, for example, to determine whether a certain player has only one outside option or several similar options. In the best case - if the methods are consistent in the sense defined above - the exact representation of the game does not affect the qualitative results. 

However, as we have shown, the two requirements of consistency and differentiability are incompatible. If the results of a game theoretic equilibrium notion depend differentiably on the payoffs, then these results also depend on the representation. In this sense, solution concepts for games in strategic form are necessarily imperfect. Therefore any game theoretic concept that can be applied to normal form games faces the choice whether it violates one requirement or the other. The matrix presented as Tab.~\ref{ClassGameCh5} attempts to give an overview over some choices that were made. It classifies some popular game theoretic tools according to whether they violate the smooth dependence on payoffs condition or the non-manipulability condition. Of course, such a list is notoriously incomplete and each cell of this matrix might contain several other elements - with the exception of the cell that corresponds to the differentiable and consistent concepts.

\begin{table}[htp]
\definecolor{hellgrau}{gray}{0.9}
\definecolor{grau}{gray}{0.8}
\centering
\begin{tabular}{lcc}
	&Differentiable concepts	&Non-differentiable concepts\\[0.2cm]

 \parbox{.2cm}{ \centering 
\begin{turn}{90} \parbox{4.5cm}{\centering Consistent concepts} \end{turn}}
&\colorbox{hellgrau}{\parbox{7cm}{~\\[6cm]}}
&\colorbox{grau}{
\parbox{7cm}{\footnotesize 
Nash equilibrium \citep{Nash50}\\[0.3cm]
{\bf Refinements of the Nash equilibrium}\\[0.1cm]
\text{~~~} Perfect equilibrium \citep{Selt75}\\
\text{~~~} Proper equilibrium \citep{Myer78}\\[0.3cm]
{\bf Approaches that apply the Nash equilibrium to transformed utilities}\\[0.1cm]
\text{~~~} Fairness model of \citet{FehSch99}\\[0.3cm]
{\bf Learning processes for which Nash equilibria are rest points}\\[0.1cm]
\text{~~~} Fictitious play \citep{Brown51}\\
\text{~~~} \parbox{5.5cm}{Replicator dynamics \citep{TayJon78}}\\
\text{~~~} \parbox{5.5cm}{Best response dynamics \citep{GilMat91}}}}\\
~	&~	&~\\

 \parbox{.2cm}{\centering 
\begin{turn}{90} \parbox{4.5cm}{\centering Inconsistent concepts} \end{turn}}
&\colorbox{grau}{
\parbox{7cm}{\footnotesize 
\parbox{6.8cm}~\\[0.3cm]{\bf Behavioural equilibrium notions}\\[0.1cm]
\text{~~~} QRE \citep{McKPal95}\\
\text{~~~} \parbox{6cm}{Level-$k$ reasoning model \citep{StaWil95}}\\
\text{~~~} Noisy introspection \citep{GoeHol04}\\[0.3cm]
{\bf Smooth learning processes}\\[0.1cm]
\text{~~~} \parbox{6.5cm}{Exponential fictitious play \citep{FudLev98}}\\[0.3cm]
\parbox{6.8cm}{\bf Long run equilibria for evolutionary processes with uniform mutations and smooth selection}\\[0.1cm]
\text{~~~} \parbox{6cm}{Moran process \citep{AntTraOhtTarNow09,Oht10}}\\[0.3cm]}}
&\colorbox{hellgrau}{
\parbox{7cm}{\footnotesize 
~\\[2.4cm]
\parbox{6.8cm}{\bf Long run equilibria for evolutionary processes with uniform mutations and best-reply selection}\\[0.1cm]
\text{~~~} \parbox{6cm}{Moran process with strong selection \citep{FudNowTayImh06}}\\[2.1cm]}}
\end{tabular}
\parbox{15cm}{\centering \parbox{13cm}{~\\[-.3cm]
\caption{A classification of game theoretic concepts}\label{ClassGameCh5}}} 
\end{table}

A natural question is then to ask which of the two requirements is the more indispensable one. Differentiable equilibrium notions, and in particular the QRE, are quite successful in predicting human behaviour for normal form games - once it is known which representation of the game the subjects choose. In laboratory experiments this is certainly no problem, since it may be assumed that the players' internal model of the game is close to the instructions that are provided by the experimenter (in particular it is likely that all subjects have a similar internal representation). From a behavioural point of view, the framing effects presented in the previous sections even seem to be a desirable feature - after all it is well documented that humans are subject to framing effects as well.\footnote{For the related question whether subjects in dynamic games play differently if confronted with different game trees that represent formally equivalent games, see \citet{McKPal98}.} Psychologically, it is not unreasonable to expect that a duplication of the outside option increases the number of $L$ players in game (\ref{PayMat1Ch5}). The outside options may act as a coordination device: Because both options point to the low equilibrium, this equilibrium may be interpreted as a focal point \citep{Schelling60}. 

However, if it comes to explain human behaviour in the field it is not at all clear how individuals perceive their interactions, let alone that these perceptions are comparable across subjects. For theoretical modeling, the above described framing effects are undesirable (or even dangerous). If an equilibrium concept leads to predictions that depend on the representation of the game (which is chosen by the modeler himself), then the results will be somewhat arbitrary in the best case and manipulable in the worst.

A possible solution to avoid framing effects in differentiable equilibrium notions is to consider the equivalence class of a game. That is, instead of calculating the logit equilibrium $\hat{p}_\lambda(M)$ of a game $M$ one may calculate the set of possible logit equilibria $\hat{p}_\lambda([M])$ for all games that are equivalent to $M$,
\begin{equation}
[M]=\left\{\hat{M}\in \mathcal{M}_k~\big|~\hat{M}\sim M\right\}.
\end{equation}
However, in this case, the logit equilibrium loses its ability to select a unique Nash equilibrium in the limit of rational agents, $\lambda\rightarrow \infty$. Instead, most of the Nash equilibria of a game $M$ (including all strict Nash equilibria) are predictable by the unique branch of $\hat{p}_\lambda(\hat{M})$ - if only the game is appropriately framed. Therefore, it seems to me that the solution concept of the Nash equilibrium is \citep[almost, see][]{Selt75,Myer78} as good as it gets.

\bibliographystyle{abbrvnat} \bibliography{literature}
\end{document}